\documentclass[11pt]{article}
\usepackage[IL2]{fontenc}
\usepackage[letterpaper,margin=1.00in]{geometry}
\usepackage{amsmath, amssymb, amsthm, amsfonts}
\usepackage{bbm}
\usepackage{amscd}
\usepackage{mathrsfs}

\usepackage{comment} % \begin{comment}\end{comment} to comment stuff that may be uncommented later on
\usepackage{ifthen}
\usepackage{tikz}
\usetikzlibrary{positioning,decorations.pathreplacing}
\usepackage{ bbold }
\usepackage{appendix}
\usepackage{graphicx}
\usepackage{color}
\usepackage{epstopdf}
\usepackage{wrapfig}
\usepackage{paralist}
\usepackage{wasysym}
\usepackage[textsize=tiny]{todonotes}
%[textsize=tiny] normally
%[disable] to disable

\usepackage{listings} % listing code
% e.g. \begin{lstlisting}[language=Python] for python

%\setlength{\marginparwidth}{2cm}
%apparently can fix todonotes if they are outside of the page

\usepackage{pdflscape} % \clearpage\begin{landscape}...\end{landscape}\clearpage for landscape page

\usepackage{framed}
\usepackage[framemethod=tikz]{mdframed}
\usepackage[bottom]{footmisc}
\usepackage{enumitem}
\setitemize{noitemsep,topsep=3pt,parsep=3pt,partopsep=3pt}
\usepackage[font=small]{caption}
%\captionsetup{font=small}
\usepackage{xspace}

\usepackage{thmtools} % not sure what is it good for but probably has to do something with thm-restate
\usepackage{thm-restate} % \begin{restatable}{theorem}{thmmain}..\end{restatable}
% and then \thmmain*
% common error: \begin{restatable}{theorem}[thmmain]

% CL: Moved this here so it is ``seen'' by hyperref. Otherwise I got ``destination with same identifier'' issues for all theorems, lemmas, etc. with an up-to-date latex installation at home (worked on the crappy out-of-date one at work, though :D).

\usepackage{hyperref}
\hypersetup{
    unicode=false,          % non-Latin characters in Acrobat’s bookmarks
    colorlinks=true,        % false: boxed links; true: colored links
    linkcolor=red,          % color of internal links (change box color with linkbordercolor)
    citecolor=darkgreen,        % color of links to bibliography
    filecolor=magenta,      % color of file links
    urlcolor=cyan           % color of external links
}

\newtheorem{theorem}{Theorem}[section]
\newtheorem{lemma}[theorem]{Lemma}
\newtheorem{meta-theorem}[theorem]{Meta-Theorem}

\newtheorem{corollary}[theorem]{Corollary}

\newtheorem{observation}[theorem]{Observation}
\newtheorem{definition}[theorem]{Definition}

\usepackage[capitalize, nameinlink,noabbrev]{cleveref}
%capitalize capitalizes the first letter
%nameinlink probably puts the name in link

%\crefformat{footnote}{#2\footnotemark[#1]#3}
\crefname{theorem}{Theorem}{Theorems}
\crefname{proposition}{Proposition}{Propositions}
\crefname{observation}{Observation}{Observations}
\crefname{lemma}{Lemma}{Lemmas}
\crefname{claim}{Claim}{Claims}
\crefname{problem}{Problem}{Problems}
\crefname{conjecture}{Conjecture}{Conjectures}
\crefname{question}{Question}{Questions}
\crefname{example}{Example}{Examples}
\crefname{fact}{Fact}{Facts}
%--------------------------------------------------------------------

\definecolor{darkgreen}{rgb}{0,0.5,0}

\usepackage{algcompatible}
\algnewcommand\algorithmicswitch{\textbf{switch}}
\algnewcommand\algorithmiccase{\textbf{case}}

% New "environments"
\algdef{SE}[SWITCH]{Switch}{EndSwitch}[1]{\algorithmicswitch\ #1\ \algorithmicdo}{\algorithmicend\ \algorithmicswitch}%
\algdef{SE}[CASE]{Case}{EndCase}[1]{\algorithmiccase\ #1}{\algorithmicend\ \algorithmiccase}%
\algtext*{EndSwitch}%
\algtext*{EndCase}%
%----------------------------------------------------------------------

\renewcommand{\phi}{\varphi}

\renewcommand{\paragraph}[1]{\vspace{0.15cm}\noindent {\bf #1}:}

%--------------------------- Full Or Short --------------------------------------------- ---------------------------------------------------------------------------------------- ----------------------------------------------------------------------------------------

 %This is for printing purposes while writing/editing
\newcommand{\FullOrShort}{full}
\ifthenelse{\equal{\FullOrShort}{full}}{
  
  \newcommand{\fullOnly}[1]{#1}
  \newcommand{\shortOnly}[1]{}

  }{

    \newcommand{\fullOnly}[1]{}
    \newcommand{\IncludePictures}[1]{}
   
  }

%-----------------------------------------------------------------------------------

\newcommand{\Pot}

\usepackage{algorithm}
\usepackage[noend]{algpseudocode}

\newcommand{\iapsp}{\textsc{Incorrect-APSP} }
\newcommand{\apsp}{\textsc{APSP}}

\usepackage{amsmath}
\hypersetup{
    colorlinks=true,
    filecolor=magenta,          urlcolor=black,
    pdftitle={Overleaf Example},
    pdfpagemode=FullScreen,
    }

\title{Anarchy in the APSP: Algorithm and Hardness for Incorrect Implementation of Floyd-Warshall}

\begin{document}
\date{}
%\author{anonymous author(s)}
 \author{Jaehyun Koo \\ \small MIT \\ \small koosaga@mit.edu}
\maketitle

\begin{abstract}
The celebrated Floyd-Warshall algorithm efficiently computes the all-pairs shortest path, and its simplicity made it a staple in computer science classes. Frequently, students discover a variant of this Floyd-Warshall algorithm by mixing up the loop order, ending up with the incorrect APSP matrix. This paper considers a computational problem of computing this \textit{incorrect} APSP matrix. We will propose efficient algorithms for this problem and prove that this incorrect variant is APSP-complete.
\end{abstract}

\setcounter{page}{1}

\section{Introduction}
The Floyd-Warshall algorithm computes the all-pairs shortest path (APSP) in a directed weighted graph \cite{floyd, warshall, CLRS}. Known in the computer science community for over 60 years, Floyd-Warshall is still one of the most efficient algorithms for the APSP problem, where it has a runtime of $O(n^3)$ for a graph with $n$ vertices. No algorithms have improved this runtime by a polynomial factor in general graphs, which motivates the APSP Conjecture by \cite{JACM18}. 

Another remarkable characteristic of this algorithm is its simplicity - the standard implementation of this algorithm is a short triply nested loop, as shown below:

\begin{algorithm}
\caption{The Floyd-Warshall Algorithm (KIJ Algorithm)}\label{alg:kij}
\begin{algorithmic}
\State $A \leftarrow $ Adjacency matrix of the graph
\Ensure $A[i, i] = 0$ for all $1 \le i \le n$
\Ensure $G$ has no negative cycles
\For{$k\gets 1, 2, \ldots, n$}
\For{$i\gets 1, 2, \ldots, n$}
\For{$j\gets 1, 2, \ldots, n$}
\State $A[i, j] \leftarrow \min(A[i, j], A[i, k] + A[k, j])$
\EndFor
\EndFor
\EndFor
\end{algorithmic}
\end{algorithm}

Unfortunately, due to the algorithm being too simple, some students write a variant of the Floyd-Warshall algorithm, either by mistake or as a deliberate attempt to rectify the loop order into a \textit{natural} lexicographic order. \footnote{We are unaware of any publication over this variant except \cite{punkapsp}, but some Internet forums and even the lecture material discuss it. Examples are: \\ \scriptsize\url{https://www.quora.com/Why-is-the-order-of-the-loops-in-Floyd-Warshall-algorithm-important-to-its-correctness}, \url{https://cs.stackexchange.com/questions/9636/why-doesnt-the-floyd-warshall-algorithm-work-if-i-put-k-in-the-innermost-loop}, \url{https://codeforces.com/blog/entry/20882}, \url{https://cs.nyu.edu/~siegel/JJ10.pdf}. }
\begin{algorithm}
\caption{The Variant of Floyd-Warshall Algorithm (IJK Algorithm)}\label{alg:ijk}
\begin{algorithmic}
\State $A \leftarrow $ Adjacency matrix of the graph $G$
\Ensure $A[i, i] = 0$ for all $1 \le i \le n$
\Ensure $G$ has no negative cycles
\For{$i\gets 1, 2, \ldots, n$}
\For{$j\gets 1, 2, \ldots, n$}
\For{$k\gets 1, 2, \ldots, n$}
\State $A[i, j] \leftarrow \min(A[i, j], A[i, k] + A[k, j])$
\EndFor
\EndFor
\EndFor
\end{algorithmic}
\end{algorithm}

\newpage
\cref{alg:ijk} do not compute the correct APSP matrix. For example, let $A$ be the adjacency matrix, and the $M_1$ and $M_2$ be the resulting matrix computed by \cref{alg:kij} and \cref{alg:ijk} on $A$. The following choice of $A$ makes $M_1[2, 1]$ and $M_2[2, 1]$ different:

\begin{align*}
A = \begin{bmatrix}
0 & \infty & \infty & \infty \\
\infty & 0 & \infty & 1 \\
1 & \infty & 0 & \infty \\
\infty & \infty & 1 & 0
\end{bmatrix}, 
M_1 = \begin{bmatrix}
0 & \infty & \infty & \infty \\
3 & 0 & 2 & 1 \\
1 & \infty & 0 & \infty \\
2 & \infty & 1 & 0
\end{bmatrix}, M_2 = \begin{bmatrix}
0 & \infty & \infty & \infty \\
\infty & 0 & 2 & 1 \\
1 & \infty & 0 & \infty \\
2 & \infty & 1 & 0
\end{bmatrix}
\end{align*}

Since \cref{alg:ijk} cannot compute the APSP correctly, it is natural to regard it as a novice mistake and move on. But we encounter surprisingly nontrivial questions when we try to understand what's happening. For example, in a sparse graph with no negative edges, we can compute the APSP problem using Dijkstra's algorithm \cite{dijkstra} for computing a single-source shortest path in each vertex. Using the Fibonacci heap from \cite{FT84} yields an $O(nm + n^2 \log n)$ time algorithm for $n$-vertex, $m$-edge graphs. Hence, we have an efficient way to obtain the resulting matrix by \cref{alg:kij}, but for \cref{alg:ijk}, despite being a seemingly \textit{novice} version, it is unclear how to obtain such matrix efficiently. To this end, we formally define the \iapsp problem as follows:

\begin{definition}[\textsc{Incorrect-APSP}]
Given a weighted graph with $n$ vertices without negative cycles, compute the matrix returned by \cref{alg:ijk}.
\end{definition}

We note that calling \cref{alg:ijk} as an \textit{incorrect} algorithm might be unfair. Indeed, a recurring theme in art and fashion is to take a seemingly \textit{correct} version of a piece and twist it cleverly so that people can find something new from a familiar composition. Even in theoretical computer science, such attempts are not new: For example, a famous \textit{Bogosort} \cite{bogosort} is a sorting algorithm deliberately engineered to perform worse. A more similar example where people consider a variant of a well-known algorithm also exists \cite{punkapsp, punksort}. In this way, depending on the inspiration we drew, we can call the \cref{alg:ijk} as either \textsc{Improvised-APSP} (Jazz music reference), \textsc{Punk-APSP} (Rock music reference), or \textsc{Bogo-APSP} (Bogosort reference). However, we will (unfortunately) call the problem \iapsp to avoid possible confusion.

A significant inspiration for this work is an arXiv preprint by \cite{punkapsp}, which considers the exact problem of \iapsp. In the preprint, the authors proved that running \cref{alg:ijk} three times on the given adjacency matrix will compute a correct APSP distance matrix. In other words, the \iapsp problem is not entirely incorrect - even if the student does not know the correct loop order, they can run the algorithm three times and obtain a correct APSP matrix. While this result itself is funny (who cares about the loop order if you can just run three times?), it raises an intriguing question about the hardness of the \iapsp problem, as their result implies that \iapsp is at least as hard as the APSP problem. In other words, a subcubic implementation of \iapsp can imply a breakthrough! So, is this seemingly \textit{novice} mistake of APSP made the problem strictly harder? Or can we find a subcubic reduction to prove that the \iapsp problem is equivalent to the APSP problem?

\subsection{Our Results}
Our results in this paper are the following:

\begin{restatable}{theorem}{sparse}
\label{thm:sparse}
    Given a weighted graph with $n$ vertices and $m$ edges, we can solve the \iapsp in $O(n T_{SSSP}(n, m))$ time, where $T_{SSSP}(n, m)$ is the time to execute a single-source shortest path (SSSP) algorithm in a graph with $n$ vertices and $m$ edges.
\end{restatable}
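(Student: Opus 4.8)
The plan is to build the output matrix one row at a time, in the order $i=1,2,\dots,n$, spending one single-source shortest path computation on each row. The starting point is a structural observation about \cref{alg:ijk}: the $i$-th pass of the outer loop modifies only row $i$, and at the moment this pass begins $A$ still equals the input $A_0$ in rows $i,\dots,n$ but already holds the final answer in rows $1,\dots,i-1$ (each produced by an earlier outer pass). Write $D^{(r)}$ for the final $r$-th row. Tracking the running value $x_c:=A[i,c]$ through the inner loops ``\textbf{for} $j$: \textbf{for} $k$: $A[i,j]\gets\min(A[i,j],A[i,k]+A[k,j])$'', and using that while column $j$ is being processed one has $x_k=D^{(i)}_k$ for $k<j$ but $x_k=A_0[i,k]$ for $k\ge j$ (the terms $k=j$ and $k=i$ being no-ops, using $A[t,t]=0$ and no negative cycles), one obtains the recurrence
\[
 D^{(i)}_j=\min\Big(A_0[i,j],\ \min_{k<j}\big(D^{(i)}_k+w(k,j)\big),\ \min_{k>j}\big(A_0[i,k]+w(k,j)\big)\Big),\qquad w(k,j)=\begin{cases}D^{(k)}_j,& k<i,\\ A_0[k,j],& k\ge i.\end{cases}
\]
Since the equation for column $j$ involves only columns $k<j$ on its right-hand side, it determines $D^{(i)}$ uniquely, and left-to-right.

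I would then read this recurrence as a shortest-path problem. Introduce a source $\sigma$ and, for every vertex $v$, a ``fresh'' copy $v^{\circ}$ and an ``ascending'' copy $v^{+}$; take arcs $\sigma\to v^{\circ}$ of weight $A_0[i,v]$ (the first move out of $i$), $v^{\circ}\to v^{+}$ of weight $0$ (the direct term), $v^{\circ}\to w^{+}$ of weight $w(v,w)$ for $w\ne v$ (the single admissible descent), and $v^{+}\to w^{+}$ of weight $w(v,w)$ whenever $v<w$ (the monotone ascents). The ascending copies are joined only in increasing index order, so this digraph is acyclic; distances from $\sigma$ exist and, by an easy induction on $j$ (the arcs $v^{\circ}\to w^{+}$ with $v<w$ being dominated by $v^{+}\to w^{+}$ since $\operatorname{dist}(\sigma,v^{\circ})=A_0[i,v]\ge D^{(i)}_v$), they satisfy $D^{(i)}_j=\operatorname{dist}(\sigma,j^{+})$. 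Hence one SSSP call per row produces that row.

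The step I expect to be the real obstacle is keeping each of these digraphs small. The arcs carrying $A_0$-weights are oriented copies of the edges of $G$ together with $O(n)$ trivial arcs, so they contribute only $O(n+m)$; the danger is the arcs of weight $w(v,w)=D^{(v)}_w$ with $v<i$, of which the naive construction installs $\Theta(in)$, giving $\Theta(n^3)$ in total and defeating the point. The plan to defeat this is to process the rows in order while maintaining shared auxiliary structure: when row $v$ is completed one also records a shortest-path tree of its digraph rooted at its source, an object of size $O(n)$, and one lets any later row reach $w^{+}$ at cost $D^{(v)}_w$ by being routed through a single extra arc into that source rather than through $n$ explicit shortcut arcs; since the fresh and ascending copies are common to all rows, the ``detour gadgets'' are installed once rather than once per pair, so the digraph queried for row $i$ still has $O(n+m)$ arcs. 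The delicate points — which I expect to be the crux of the proof — are to show that following a detour into row $v$ can neither create a shortcut not sanctioned by the recurrence nor be followed by a jump into some row $\ge i$ (the point being that the ``active row'' can only strictly decrease along a detour, so that even with shared copies a computation started at $\sigma_i$ never exploits a row it should not), and that this correctly handles the possibly non-monotone nesting of detours. Granting this, each of the $n$ shortest-path computations runs on a graph with $O(n)$ vertices and $O(n+m)$ arcs, for a total of $O(n\,T_{SSSP}(n,m))$; negative edge weights are allowed because $G$ has no negative cycle, which only affects which subroutine $T_{SSSP}$ refers to.
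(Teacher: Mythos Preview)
Your recurrence for $D^{(i)}_j$ is correct, and the uncompressed digraph with explicit arcs of weight $D^{(v)}_w$ for $v<i$ does compute row $i$ via one SSSP call. The gap is entirely in the size reduction, and it is a real gap, not just missing detail. If the ``detour gadget'' literally installs each earlier row's shortest-path tree, those trees contribute $\Theta(n)$ arcs apiece, so row $i$'s graph has $\Theta(in)$ arcs and you are back to cubic total work. The only way to get $O(n+m)$ arcs per row is the recursive reading you hint at: for $v<i$ replace all outgoing arcs of $v^{\circ},v^{+}$ by a single zero-weight arc into $\sigma_v$, let $\sigma_v$ fan out with weights $A_0[v,\cdot]$, and reuse the shared copies. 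But that construction is simply wrong. Take $n=5$ with the single directed path $5\to 2\to 3\to 4\to 1$ of unit weights and all other entries $\infty$. Running \cref{alg:ijk} gives $D^{(5)}_1=\infty$ (you can check $D^{(2)}_1=\infty$, and that is the only finite hook row $5$ could use). Yet in the recursive shared graph the walk $\sigma_5\to 2^{\circ}\to\sigma_2\to 3^{\circ}\to\sigma_3\to 4^{\circ}\to\sigma_4\to 1^{\circ}\to 1^{+}$ has cost $4$. The failure is exactly the one you flagged as delicate: once you jump into $\sigma_v$, nothing stops the detour from subsequently wandering through $\sigma_{v'}$ with $v<v'<i$, producing walks the recurrence never sanctions.

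The paper sidesteps this by never recursing through earlier rows. It first proves a structural theorem (\cref{thm:main}) characterizing exactly which simple paths \cref{alg:ijk} realizes: for $i<j$, they are the paths that are \emph{proper} from $i$ to some $p_x\ge i$ (no two consecutive internal vertices exceed $i$) followed by an increasing tail to $j$; the case $i>j$ is symmetric under edge reversal. Once you have this, row $i$ depends only on the input graph, not on $D^{(v)}$ for $v<i$: run SSSP from $i$ on $G$ with every edge joining two vertices $>i$ deleted, do one linear relaxation pass to allow an arbitrary final edge, then a DP in increasing vertex order to append the tail. Each row costs $T_{SSSP}(n,m)+O(n+m)$. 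So the idea you are missing is precisely this combinatorial description of realized paths; your recurrence encodes the same information implicitly, but unwinding it without the characterization is where your argument gets stuck.
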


\begin{restatable}{theorem}{complete}
\label{thm:complete}
\iapsp is subcubic equivalent to \apsp.
\end{restatable}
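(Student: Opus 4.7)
The claim requires subcubic reductions in both directions. The first direction, $\mathsf{APSP} \leq_{sc} \iapsp$, is immediate from the cited result of \cite{punkapsp}: running \cref{alg:ijk} three times on any input matrix produces the true APSP distance matrix, so three \iapsp oracle calls with $O(n^2)$ bookkeeping between them give APSP, which is trivially a subcubic reduction.

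The substantive direction is $\iapsp \leq_{sc} \mathsf{APSP}$. Exploiting the well-known subcubic equivalence of APSP and $(\min,+)$ matrix multiplication, it suffices to compute $B = \text{IJK}(A_0)$ using a small number of $(\min,+)$ products plus subcubic overhead. Inspecting \cref{alg:ijk}, one sees that
\[
B[i,j] \;=\; \min_k\bigl(f_{i,j}(k) + g_{i,j}(k)\bigr),
\]
where $f_{i,j}(k) = B[i,k]$ if $k<j$ and $A_0[i,k]$ otherwise, and symmetrically $g_{i,j}(k) = B[k,j]$ if $k<i$ and $A_0[k,j]$ otherwise. A case split on the position of $k$ relative to $i$ and $j$ partitions the minimum into four ``triangular'' sub-products: a pure-$A_0$ term when $k \geq \max(i,j)$ (computable immediately as a single restricted $(\min,+)$ product), mixed $B$--$A_0$ terms when $\min(i,j) \leq k < \max(i,j)$, and a pure-$B$ term when $k < \min(i,j)$.

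The plan is then to process $B$ in row-blocks of size $r$: for each row-block, the contributions from rows outside the block (whose $B$-values are already computed) are batched into a single $(\min,+)$ matrix multiplication with appropriate index masks, while the contributions within the block are handled by a sequential sweep that respects the IJK ordering, possibly with a further column-blocking layer and batched matrix-vector products. With $r$ chosen to balance the matrix-multiplication cost against the internal sweep cost, the total running time becomes subcubic. The main obstacle is the intra-row sequential dependency---each $B[i,j]$ requires the entries $B[i,k]$ for $k<j$ in the same row---which forces a careful nested sub-blocking scheme and a case analysis on the relative orderings of block boundaries, row indices, and column indices to guarantee that the IJK order is faithfully simulated while still extracting speedups from batched $(\min,+)$ products.
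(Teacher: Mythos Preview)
Your reduction $\mathsf{APSP}\le_3\textsc{Incorrect-APSP}$ coincides with the paper's \cref{thm:hard}. The issue is entirely in the other direction.

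Your recurrence for $B[i,j]$ is correct, and the four-way split on the position of $k$ is the natural thing to write down. But the proposal stops exactly at the hard step. The ``pure-$B$'' term $\min_{k<\min(i,j)} B[i,k]+B[k,j]$ and the mixed term $\min_{i\le k<j} B[i,k]+A_0[k,j]$ together make $B[i,j]$ depend on \emph{every} earlier entry $B[i,k]$, $k<j$, in the same row, with edge weights $W_i[k,j]$ that change with $i$ (namely $W_i[k,j]=B[k,j]$ for $k<i$ and $A_0[k,j]$ for $k\ge i$). For a fixed $i$ this is a single-source shortest path in a complete DAG on $\{1,\dots,n\}$, costing $\Theta(n^2)$; doing this row by row is exactly $\Theta(n^3)$, which is just the paper's \cref{thm:sparse} in the dense case. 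To beat cubic you must batch rows, but the DAGs for different $i$ are genuinely different graphs (they agree on all rows $<i_0$ and disagree on the block rows), and you give no mechanism for solving a batch of SSSP instances on \emph{distinct} dense DAGs in subcubic total time. ``A careful nested sub-blocking scheme and a case analysis'' is a description of the difficulty, not a solution to it; nothing in the proposal explains why the intra-row chain, which is $n$ sequential relaxations each depending on the previous, can be parallelised into a $(\min,+)$ product. As written, the proposal does not establish $\textsc{Incorrect-APSP}\le_3\mathsf{APSP}$.

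The paper takes a completely different route and never attempts to simulate \cref{alg:ijk}. It first proves the structural \cref{thm:main}: a simple path is realized by $T_{ijk}(n)$ iff it decomposes as a \emph{proper} path followed by an increasing tail (when $i<j$), or a decreasing head followed by a proper path (when $i>j$). The reduction then becomes ``compute minimum-weight proper paths for all pairs, then one more $(\min,+)$ product to append the monotone tail.'' For proper paths, \cref{lem:propvalley} shows that a proper path in $G$ is exactly a \emph{valley} path in $G^2$, and valley paths admit a doubling recursion: any valley path splits at its minimum into a concatenation of descending valley subpaths and ascending valley subpaths, each of at most half the length. One round of doubling is two APSP calls (on the descending-only and ascending-only restrictions of the current matrix) plus one $(\min,+)$ product, so $O(\log n)$ oracle calls suffice (\cref{alg:complete}). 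The key idea you are missing is precisely this structural characterisation; without it, the sequential dependency you identified appears to be a genuine barrier.
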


\cref{thm:sparse} concerns a notable special case where $n$ iteration of single-source shortest path (SSSP) algorithm is known to outperform the Floyd-Warshall algorithm on the APSP problem. For the \iapsp problem, it is unclear whether it can be optimized using the SSSP-based approach. We show this is possible and present an algorithm matching the APSP problem's corresponding bounds. Note that the current record for the SSSP algorithm is $T_{SSSP}(n, m) = O(m \log^2 n \log (n M) \log \log n)$ (\cite{bringmann}). Under the assumption where $G$ do not contain negative weight edges, $T_{SSSP}(n, m) = O(n \log n + m)$ (\cite{FT84}).

\cref{thm:complete} implies that both \iapsp and APSP have a subcubic algorithm or both do not, meaning that we have a new member in the list of APSP-complete problems where the subcubic solution to any of them implies a breakthrough.

\paragraph{Organization of our paper} In \cref{sec:prelim}, we list the definitions and carefully formalize \cref{alg:ijk} in graph theoretic notions. In \cref{sec:mainthm}, we prove that the \iapsp problem is equivalent to the path minimization problem with certain constraints. In \cref{sec:sparse}, we prove \cref{thm:sparse} by solving the path minimization problem by combining single-source shortest path problem and dynamic programming. In \cref{sec:complete}, we prove \cref{thm:complete} by devising an algorithm that solves the \iapsp in subcubic time under the subcubic APSP oracle.

\section{Preliminaries} \label{sec:prelim}
\paragraph{Notations} For integers $i$ and $j$, we use $[i, j]$ to denote the set $\{i, i+1, \ldots, j\}$. Let $G = (V, E, w)$ be a weighted directed graph, where $V = [1, n]$ is the set of vertices, $E \in V \times V$ is a set of edges where an element $(i, j) \in E$ represents a directed edge from vertex $i$ to vertex $j$, and $w: E \rightarrow [-M, M]$ is the weight function of edges. As we represent vertices as integers, we can say that a vertex is \textit{smaller} or \textit{larger} than the other vertex by comparing the integer. We heavily rely on such notation, as our algorithm iterates $V$ in the order of these integers. 

We can use another representation for the directed weighted graphs if we do not rely on the graph's sparsity. For a graph $G = (V, E, w)$ with $V = [1, n]$, an \textit{adjacency matrix} $A(G)$ of $G$ is a $n \times n$ matrix such that:

\begin{align*}
    A(G)[i, j] =\left\{\begin{array}{lr}
        0 & \text{ if } i = j \\
        w(i, j) & \text{ if } i \neq j, (i, j) \in E \\
        \infty & \text{ otherwise }
        \end{array}\right\}
\end{align*}

Note that $A(G)[i, j]$ may not accurately represent $G$ if it has negative-weight loops, but this is not a problem since we will invoke \cref{alg:kij} or \cref{alg:ijk} only if $G$ has no negative cycles. Here, $\infty$ is an element where $\infty + x = \infty$ holds for all $x \in [-M, M]$.

\subsection{Definitions}
\cref{alg:kij} and \cref{alg:ijk} both try to recognize certain kinds of paths in a graph: Starting from the paths with at most one edge, it tries to build up a new paths that is a concatenation of two paths. Here, we will formalize it so that we can view it as a minimization problem over certain types of paths. In \cref{sec:mainthm}, we will argue that there is a simple characterization of paths realized by \cref{alg:ijk}.

We formally define a concept of \textit{path realization}, a generic description of paths realized by the algorithms above.

\begin{definition}\label{def:realize}
A path $P = \{p_0, p_1, \ldots, p_m\}$ is \textbf{realized} by a sequence of 3-tuple \\$T = \{(i_1, j_1, k_1), (i_2, j_2, k_2), \ldots, (i_l, j_l, k_l)\}$, if and only if $m \leq 1$, or there exists a pair of integer $d, x$ such that:
\begin{itemize}
    \item $1 \le d \le l$
    \item $1 \le x \le m - 1$
    \item $(i_d, j_d, k_d) = (p_0, p_m, p_x)$
    \item $\{p_0, p_1, \ldots, p_x\}$ is realized by $\{(i_1, j_1, k_1), (i_2, j_2, k_2), \ldots,(i_{d-1}, j_{d-1}, k_{d-1})\}$.
    \item $\{p_x, p_{x+1}, \ldots, p_m\}$ is realized by $\{(i_1, j_1, k_1), (i_2, j_2, k_2), \ldots, (i_{d-1}, j_{d-1}, k_{d-1})\}$.
\end{itemize}
\end{definition}

We list some immediate corollaries that can be shown by structural induction on \cref{def:realize}. While the corollaries themselves are somewhat trivial, they help us to organize our theorems in a purely mathematical way.

\begin{corollary}\label{def:obvrel_kij}
    Let $T_{kij}(n)$ be a sequence of length $n^3$ consisting of 3-tuples, where the $an^2 + bn + c + 1$-th element in the sequence is $(b+1,c+1,a+1)$ for all $0 \le a, b, c \le n - 1$. Given a $n \times n$ adjacency matrix of $G$, where $G$ has no negative cycle, \cref{alg:kij} will return a matrix $M$, where $M[i, j]$ is the minimum total weight of all simple paths from $i$ to $j$ that is realized by $T_{kij}(n)$.
\end{corollary}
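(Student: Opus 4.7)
The plan is to reduce this to the textbook correctness of Algorithm~\ref{alg:kij} plus a combinatorial fact stating that every simple path in $G$ is realized by $T_{kij}(n)$. First I would recall the classical Floyd--Warshall loop invariant for Algorithm~\ref{alg:kij}: after the outermost loop has processed indices $1,\dots,k$, the entry $A[i,j]$ equals the minimum weight of a simple $i$-to-$j$ path whose intermediate vertices all lie in $[1,k]$. Taking $k=n$, the returned value $M[i,j]$ is the minimum weight over \emph{all} simple $i$-to-$j$ paths (well defined since $G$ has no negative cycle). So it suffices to show that the set of simple $i$-to-$j$ paths realized by $T_{kij}(n)$ equals the set of all simple $i$-to-$j$ paths; the inclusion from left to right is tautological, and only the other direction requires work.

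I would first record a one-line monotonicity observation from \cref{def:realize}: if $P$ is realized by $T$ and $T$ is a prefix of $T'$, then $P$ is realized by $T'$, since the same witness pair $(d,x)$ satisfies the definition for $T'$. Using this, the key lemma is proved by induction on $k\in\{0,1,\dots,n\}$: every simple path $P=(p_0,\dots,p_m)$ whose intermediate vertices all lie in $[1,k]$ is realized by the length-$kn^2$ prefix $T^{\le k}$ of $T_{kij}(n)$. The base $k=0$ forces $m\le 1$ and is trivial from the definition. In the step, if $k$ is not among the intermediates of $P$, the intermediates lie in $[1,k-1]$, so the induction hypothesis plus monotonicity finishes the case. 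Otherwise simplicity forces a \emph{unique} $x\in\{1,\dots,m-1\}$ with $p_x=k$; the sub-paths $(p_0,\dots,p_x)$ and $(p_x,\dots,p_m)$ are simple with intermediates in $[1,k-1]$, hence realized by $T^{\le k-1}$ by the induction hypothesis. The tuple $(p_0,p_m,k)$ sits at position $d=(k-1)n^2+(p_0-1)n+p_m$ of $T_{kij}(n)$, and since $p_0,p_m\ge 1$ we have $d-1\ge (k-1)n^2$. Therefore $T^{\le k-1}$ is itself a prefix of the length-$(d-1)$ prefix, and monotonicity promotes the two sub-path realizations to this prefix, witnessing via $(d,x)$ that $P$ is realized by $T^{\le k}$.

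Setting $k=n$ in the lemma gives the missing inclusion and finishes the corollary. The only delicate point is the bookkeeping that the splitting tuple $(p_0,p_m,k)$ lives at a position $d$ of $T_{kij}(n)$ with $d-1\ge(k-1)n^2$; this is precisely the content of the ordering $an^2+bn+c+1\mapsto(b+1,c+1,a+1)$, and it is the step that couples the recursion in \cref{def:realize} to the outermost loop variable of Algorithm~\ref{alg:kij}. Everything else is routine structural induction, consistent with the paper's remark that this corollary is essentially a repackaging of the standard Floyd--Warshall argument.
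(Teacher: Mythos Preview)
Your argument is correct, but it takes a genuinely different route from the paper. The paper proves the equality $M[i,j]=\min\{w(P): P\text{ simple, realized by }T\}$ by a direct two-sided induction that is agnostic to the particular relaxation sequence: one direction shows that after processing the first $l$ tuples of $T$, the current entry $A[i,j]$ is at most the weight of every path realized by that length-$l$ prefix; the other direction shows that some realized path always attains the current value of $A[i,j]$. Nothing in this argument uses any property of $T_{kij}(n)$ specifically, which is why the paper can say ``proceed identically'' for \cref{def:obvrel_ijk}.

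You instead factor through the textbook correctness of Floyd--Warshall: you take for granted that $M[i,j]$ is the true shortest simple $i$-to-$j$ distance, and then separately prove that \emph{every} simple path is realized by $T_{kij}(n)$ (this is exactly the content of the paper's \cref{lem:floydwarshall}, which you have essentially re-derived). That is a perfectly valid and arguably more concrete proof of the present corollary, but it does not carry over to \cref{def:obvrel_ijk}: for the IJK order many simple paths are \emph{not} realized and the output of \cref{alg:ijk} is not the shortest-path matrix, so neither half of your factorization survives. In short, the paper's approach buys reusability across both corollaries (and any other loop order), while yours buys an immediate proof of \cref{lem:floydwarshall} as a byproduct rather than deferring it to \cref{sec:mainthm}.
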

\begin{proof}
    It suffices to prove the statement for all paths, as the graph does not contain any negative cycles, and we can turn any non-simple paths into simple paths without increasing their total length.
    
    For any path from $i$ to $j$ realized by $T(n)$, $M[i, j]$ is at most the weight of such paths. Note that the triple nested loop in \cref{alg:kij} exactly iterates the list $T_{kij}(n)$ and performs a \textit{relaxation} operation of $A[i, j] = \min(A[i, j], A[i, k]+ A[k, j])$ for each 3-tuple $(i, j, k) \in T_{kij}(n)$. Hence, we can show this by induction over \cref{def:realize}. Specifically, we can prove the following: For all path $P$ of length $k$ realized by a sequence of 3-tuples of length $l$, $M[i, j]$ is always at most the weight of $P$ after applying the relaxation operation for first $l$ elements of $T_{ijk}(n)$. Then, we can apply induction over $(k, l)$.

    Conversely, we can also prove that, for any $1 \le i, j \le n$, there exists a path of weight at most $M[i, j]$, which is realized by $T(n)$, which also follows the identical induction as above.
\end{proof}

\begin{corollary}\label{def:obvrel_ijk}
    Let $T_{ijk}(n)$ be a sequence of length $n^3$ consisting of 3-tuples, where the $an^2 + bn + c + 1$-th element in the sequence is $(a+1,b+1,c+1)$ for all $0 \le a, b, c \le n - 1$. Given a $n \times n$ adjacency matrix of $G$, where $G$ has no negative cycle, \cref{alg:ijk} will return a matrix $M$, where $M[i, j]$ is the minimum total weight of all simple paths from $i$ to $j$ that is realized by $T_{ijk}(n)$.
\end{corollary}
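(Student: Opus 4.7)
The proof proposal will parallel that of Corollary \ref{def:obvrel_kij} almost verbatim, differing only in the order of the loop variables. The key observation is that Algorithm \ref{alg:ijk}'s triple nested loop---with $i$ outermost, $j$ middle, and $k$ innermost---enumerates relaxation operations in precisely the order encoded by $T_{ijk}(n)$: the $(an^2+bn+c+1)$-th iteration performs the relaxation $A[a+1,b+1] \leftarrow \min(A[a+1,b+1],\, A[a+1,c+1]+A[c+1,b+1])$, corresponding to the tuple $(a+1,b+1,c+1)$. Once this correspondence between loop iterations and sequence elements is established, essentially the same argument as in Corollary \ref{def:obvrel_kij} carries over.

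For completeness, I would then carry out the analogous structural induction on Definition \ref{def:realize}. First, since $G$ has no negative cycles, it suffices to reason over all paths rather than only simple ones, because any cycle in a path can be removed without increasing the total weight. Second, I would prove two statements by joint induction on the path length $m$ and the prefix length $l$ of $T_{ijk}(n)$: (i) after the algorithm has performed the first $l$ relaxations, for every path $P$ from $i$ to $j$ realized by $\{(i_1,j_1,k_1),\ldots,(i_l,j_l,k_l)\}$, the current value of $A[i,j]$ is at most the weight of $P$; and (ii) the final value $M[i,j]$ is attained by some concrete path from $i$ to $j$ realized by the full sequence $T_{ijk}(n)$. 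The base case $m\le 1$ follows from the initialization to the adjacency matrix. The inductive step invokes the split point $(d,x)$ guaranteed by Definition \ref{def:realize}, uses the inductive hypothesis to bound (or realize) the weights of the two sub-paths $\{p_0,\ldots,p_x\}$ and $\{p_x,\ldots,p_m\}$---both realized by a strictly shorter prefix---and then observes that the tuple $(i_d,j_d,k_d)=(p_0,p_m,p_x)$ is precisely what the $d$-th relaxation combines.

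There is no real obstacle beyond bookkeeping: the entire proof of Corollary \ref{def:obvrel_kij} was indifferent to which permutation of $(i,j,k)$ indexes the loops, since all that mattered was that each tuple in the sequence correspond to exactly one relaxation step, executed in sequence order. Because $T_{ijk}(n)$ shares this property with $T_{kij}(n)$---it contains the same multiset of tuples and matches the outer/middle/inner loop ordering of Algorithm \ref{alg:ijk}---the inductive argument goes through unchanged with $T_{ijk}$ substituted for $T_{kij}$ throughout, yielding the claimed characterization of $M[i,j]$ as the minimum weight of paths from $i$ to $j$ realized by $T_{ijk}(n)$.
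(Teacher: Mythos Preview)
Your proposal is correct and takes essentially the same approach as the paper: the paper's own proof of this corollary is simply ``You can proceed identically as in \cref{def:obvrel_kij},'' and you have spelled out exactly that identical argument with $T_{ijk}$ substituted for $T_{kij}$.
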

\begin{proof}
You can proceed identically as in \cref{def:obvrel_kij}.
\end{proof}

The following provides some characterizations of paths that we will use in the later stage of the paper.

\begin{restatable}{definition}{inc}
\label{def:Pinc}
    A path $P = \{p_0, p_1, \ldots, p_k\}$ with $k \geq 1$ is \textbf{increasing} if for all $1 \le i \le k$ it holds that $p_{i-1} < p_i$.
\end{restatable}
\begin{restatable}{definition}{dec}
\label{def:Pdec}
    A path $P = \{p_0, p_1, \ldots, p_k\}$ with $k \geq 1$ is \textbf{decreasing} if for all $1 \le i \le k$ it holds that $p_{i-1} > p_i$.
\end{restatable}
\begin{restatable}{definition}{valley}
\label{def:Pvalley}
    A path $P = \{p_0, p_1, \ldots, p_k\}$ with $k \geq 1$ is \textbf{valley} if for all $1 \le i \le k - 1$ it holds that $p_i \le \min(p_0, p_k)$.
\end{restatable}
\begin{restatable}{definition}{proper}
\label{def:Pproper}
    A path $P = \{p_0, p_1, \ldots, p_k\}$ with $k \geq 1$ is \textbf{proper} if there is no index $1 \le i \le k - 2$ such that $p_i > \min(p_0, p_k)$ and $p_{i +1} > \min(p_0, p_k)$.
\end{restatable}

\section{Main Theorem}\label{sec:mainthm}
The main theorem of our paper is as follows:
\begin{restatable}{theorem}{main}
\label{thm:main}
In a graph with $n$ vertices, a nonempty simple path $P = \{p_0, p_1,\ldots, p_k\}$ is realized by $T_{ijk}(n)$ if and only if one of the following holds:
\begin{itemize}
    \item $p_0 < p_k$, and there exists an index $0 \le i \le k$ such that $\{p_0, p_1, \ldots, p_i\}$ is proper, $\{p_i, p_{i+1}, \ldots, p_k\}$ is increasing, and $p_i \geq p_0$.
    \item $p_0 > p_k$, and there exists an index $0 \le i \le k$ such that $\{p_0, p_1, \ldots, p_i\}$ is decreasing, $\{p_i, p_{i+1}, \ldots, p_k\}$ is proper, and $p_i \geq p_k$.
\end{itemize}
\end{restatable}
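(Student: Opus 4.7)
The plan is to prove the theorem by induction on the path length $k$, treating case A ($p_0 < p_k$) in detail; case B admits an analogous argument with increasing/decreasing swapped and $p_0, p_k$ exchanged. The base cases $k \in \{1, 2\}$ are immediate: such short paths are always realized by $T_{ijk}(n)$, and the right-hand side admits an easy witness ($i = k$ in case A, or $i = 0$ in case B).

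For the inductive step $k \geq 3$, the forward direction (right-hand side implies realizability) is constructive. Given the case A witness $i$, if $i < k$ we split at $x = k - 1$; the right subpath is a trivial edge $\{p_{k-1}, p_k\}$ and the left subpath $\{p_0, \ldots, p_{k-1}\}$ inherits case A with the same witness $i$, so it is realizable by induction. If $i = k$ (the entire path is proper with respect to $p_0$), then because $k \geq 3$ properness forces the existence of at least one interior vertex strictly below $p_0$; we pick $x$ to maximize $p_x$ among such low interior vertices. By this maximality, no interior vertex of $P$ lies strictly between $p_x$ and $p_0$, so every interior vertex that exceeds $p_x$ actually exceeds $p_0$. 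The properness of $P$ (no two consecutive high vertices) therefore transfers to both subpaths $L = \{p_0, \ldots, p_x\}$ and $R = \{p_x, \ldots, p_k\}$ being proper with respect to $p_x$, which immediately yields the right-hand side for $L$ (case B, $i_L = 0$) and for $R$ (case A, with $i_R$ equal to the last index of $R$); both are realizable by induction.

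For the backward direction (realizability implies the right-hand side), let $(p_0, p_k, p_x)$ be the witnessing top triple. If $x = k - 1$ (the right subpath is a leaf), the split constraint gives $p_{k-1} < p_k$, and we invoke the inductive hypothesis on $L = \{p_0, \ldots, p_{k-1}\}$. If $L$'s right-hand side falls in case A, we reuse the same witness and extend the increasing suffix by $p_k$; if it falls in case B (so $p_{k-1} < p_0$), we verify that $P$ is proper with respect to $p_0$ by a direct interior-pair audit and use $i = k$. If $x < k - 1$ (so the right subpath $R$ is internal), the split constraint forces $p_x < p_0$, and $R = \{p_x, \ldots, p_k\}$ is case A with inductive witness $i_R$. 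We let $j^{\star} \in [i_R, k]$ be the smallest index with $p_{j^{\star}} \geq p_0$ (which exists since $p_k > p_0$ and the suffix $\{p_{i_R}, \ldots, p_k\}$ is increasing) and claim $i = j^{\star}$ witnesses case A for $P$: the suffix $\{p_{j^{\star}}, \ldots, p_k\}$ is increasing because it sits inside $R$'s increasing suffix, and the prefix $\{p_0, \ldots, p_{j^{\star}}\}$ is shown proper by a pairwise audit of its interior.

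The main obstacle is this interior-pair audit when $x < k - 1$: each interior pair of $\{p_0, \ldots, p_{j^{\star}}\}$ must be classified according to which of up to four structural regions it falls into (the decreasing prefix and proper suffix of $L$, the proper prefix of $R$, and the initial increasing portion of $R$ below the crossover at $j^{\star}$), and the proper condition checked using two observations: that properness with respect to a smaller threshold implies properness with respect to a larger threshold, and that any pair straddling two regions has at least one endpoint strictly below $p_0$. Once this bookkeeping is established, case B follows by the dual argument, swapping the roles of increasing and decreasing as well as of $p_0$ and $p_k$.
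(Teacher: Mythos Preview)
Your proof is correct but organized differently from the paper's. The paper splits the two directions along different inductions: for $(\leftarrow)$ it inducts on the endpoint pair $(i,j)$ in lex order, first showing every proper $i\to j$ path is realized by the prefix $T_{ijk}(n,i,j)$ (via the same ``pick the largest low interior vertex'' split you use), and then observes that increasing/decreasing extensions can be appended one edge at a time. For $(\rightarrow)$ it inducts on the length $\ell$ of the prefix of $T_{ijk}(n)$ and does a six-way case analysis on the relative order of the last tuple $(i,j,k)$, exploiting that tuples like $(i,k,*)$ or $(k,j,*)$ have not yet appeared. Your version instead runs a single induction on the path length $k$ for both directions and, in the backward direction, reads off the top-level splitting tuple and applies the inductive hypothesis to the two subpaths. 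The paper's backward direction is shorter and more ``algorithmic'' (each of the six cases is a one-liner), while yours is more structural and symmetric but pays for it with the interior-pair audit you describe; both are valid.

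One point you are using implicitly and should state: any $u\to v$ path realized by $T_{ijk}(n)$ is already realized by the prefix ending at $(u,v,n)$, since its top-level realizing tuple must be of the form $(u,v,*)$. You need this in the forward direction to know that the inductively realized subpaths $L$ and $R$ are realized by prefixes that precede your chosen splitting tuple $(p_0,p_k,p_x)$; without it the induction on path length alone does not give you control over \emph{where} in $T_{ijk}(n)$ the subpaths are realized. This is the content of the paper's choice to induct on $(i,j)$ and track the prefix $T_{ijk}(n,i,j)$ explicitly.
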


By combining \cref{thm:main} with \cref{def:obvrel_ijk}, we can yield an explicit characterization for the output of \cref{alg:ijk}.

\begin{corollary}\label{cor:main}
    Given a $n \times n$ adjacency matrix of $G$ where $G$ has no negative cycle, \cref{alg:ijk} will return a matrix $M$, where $M[i, j]$ is:
    \begin{itemize}
        \item If $i < j$, the minimum possible total weight of a path $P = \{p_0 = i, p_1, \ldots, p_k = j\}$ such that there exists an index $0 \le x \le k$ such that $\{p_0, p_1, \ldots, p_x\}$ is proper, $\{p_x, p_{x+1}, \ldots, p_k\}$ is increasing, and $p_x \geq p_0$.
        \item If $i = j$, $0$.
        \item if $i > j$, the minimum possible total weight of a path $P = \{p_0 = i, p_1, \ldots, p_k = j\}$ such that $\{p_0, p_1, \ldots, p_x\}$ is decreasing, $\{p_x, p_{x+1}, \ldots, p_k\}$ is proper, and $p_x \geq p_0$.
    \end{itemize}
\end{corollary}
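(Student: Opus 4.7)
The plan is a strong induction on the path length $K$. Base cases $K \le 2$ are immediate since every length-$\le 2$ simple path is realized by the single operation $(p_0, p_K, p_1)$ and vacuously satisfies the structural decomposition. For the inductive step, I first reformulate \cref{def:realize} under the lex order of $T_{ijk}(n)$: a simple $P$ of length $K \ge 2$ is realized iff some split index $1 \le t \le K-1$ yields $P_1 = \{p_0, \ldots, p_t\}$ and $P_2 = \{p_t, \ldots, p_K\}$ both realized by the operations preceding $(p_0, p_K, p_t)$ in $T_{ijk}(n)$. Comparing lex positions shows that if $P_1$ has length $\ge 2$ then its root operation (of form $(p_0, p_t, \cdot)$) forces $p_t < p_K$, and if $P_2$ has length $\ge 2$ then its root (of form $(p_t, p_K, \cdot)$) forces $p_t < p_0$, so the intermediate vertex $p_t$ is always strictly below at least one of $p_0$ or $p_K$.

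For the forward direction (realization implies decomposition), assume WLOG $p_0 < p_K$ and apply induction to $P_1, P_2$. The easy subcase is $t = K-1$ with $p_{K-1} > p_0$ (so $P_2$ is a single edge): $P_1$'s proper-increasing decomposition extends to $P$ by appending the final edge to the increasing suffix. In the remaining subcases $p_t < p_0$; I use $P_2$'s proper split at $p_{t+s}$, setting $x = t+s$ if $p_{t+s} \ge p_0$, and otherwise taking $x$ to be the smallest index beyond $t+s$ along the strictly increasing tail at which $p_x \ge p_0$ (which exists since $p_K > p_0$). Properness of $\{p_0, \ldots, p_x\}$ at threshold $p_0$ is then a per-pair check: (i) $P_1$'s decreasing part lies strictly below $p_0$; (ii) both proper sub-paths are proper at the lower threshold $p_t$ and hence also at $p_0$ (since ``both $> p_0$'' implies ``both $> p_t$''); and (iii) the tail vertices between $p_{t+s}$ and $p_x$ lie below $p_0$ by the minimality of $x$.

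For the reverse direction, still with $p_0 < p_K$ and decomposition split $x$: if $x < K$, split algorithmically at $k = p_{K-1}$, verify $p_0 < p_{K-1} < p_K$, and observe that $P_1$ inherits the same decomposition (the increasing suffix just loses its last vertex), so induction realizes it. The hard subcase is $x = K$, i.e., $P$ itself is proper; here the naive split at $k = p_1$ or $k = p_{K-1}$ can fail because the sub-path's proper threshold drops below $p_0$ and may activate previously-fine pairs. My plan is to set $M := \max_{1 \le i \le K-2} \min(p_i, p_{i+1})$ (which is $\le p_0$ by $P$'s properness) and split at $k = p_t$ for $t$ chosen so that $p_t = M$ (from the pair realizing $M$). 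Then $k < p_0 < p_K$, and excluding the one pair that realizes $M$ leaves every remaining consecutive internal pair with minimum $\le p_t$, so both $P_1$ and $P_2$ are proper at threshold $p_t$; they admit the required monotone-plus-proper decompositions with trivial monotone portion, induction realizes both, and combining them in block $(p_0, p_K)$ at $k = p_t$ realizes $P$. The symmetric case $p_0 > p_K$ is identical with ``increasing'' and ``decreasing'' swapped, and the main obstacle throughout is precisely the $x = K$ subcase just handled via the max-min construction.
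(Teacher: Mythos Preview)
Your argument is correct, and it genuinely differs from the paper's. In the paper, the corollary is a one-line consequence of \cref{thm:main} together with \cref{def:obvrel_ijk}; what you are really doing is re-proving \cref{thm:main}. For that theorem the paper runs two \emph{different} inductions: the ``$\Rightarrow$'' direction is an induction on the prefix length of $T_{ijk}(n)$ with a six-way case split on the relative order of the last tuple $(i,j,k)$, while the ``$\Leftarrow$'' direction is an induction on the pair $(p_0,p_k)$ in lex order, splitting a proper path at the internal vertex of \emph{maximum value} below $\min(p_0,p_k)$. You instead run a single strong induction on the path length in both directions. In your ``$\Leftarrow$'' hard case you split at the vertex realizing $M=\max_{1\le i\le K-2}\min(p_i,p_{i+1})$; this is a different pivot than the paper's but works for the same reason (every other internal consecutive pair then has $\min\le p_t$, so both halves remain proper at the new threshold $p_t$). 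In your ``$\Rightarrow$'' direction you avoid the paper's case analysis at the cost of having to carefully stitch the decreasing/proper decomposition of $P_1$ to the proper/increasing decomposition of $P_2$; the per-pair check you outline in (i)--(iii) does go through once one notes that the seam vertices $p_r$ and $p_t$ are themselves $<p_0$ and that in the $p_{t+s}<p_0$ sub-case the last relevant pair $(p_{x-1},p_x)$ is never tested by the proper condition. The paper's route keeps the forward direction cleaner, while yours gives a unified induction and an alternative (max--min) pivot for proper paths that is worth knowing.
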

\begin{proof}
    By \cref{def:obvrel_ijk}, \cref{alg:ijk} returns a matrix where $M[i, j]$ is the minimum total weight of all simple paths from $i$ to $j$ realized by $T_{ijk}(n)$, which is of above form by \cref{thm:main}.
\end{proof}

\subsection{High-level ideas}

Before proceeding to the proof of \cref{thm:main}, let's play with several examples to motivate intuition. Recall the standard proof of \cref{alg:kij}, where one proves the following lemma by the induction on $t$:

\begin{lemma}[Key lemma of \cite{floyd}]\label{lem:floydwarshall}
    For all $0 \le t \le n$, a path $P = \{p_0, p_1, \ldots, p_k\}$ is realized by the $tn^2$-length prefix of $T_{kij}(n)$ if and only if $p_i \le t$ for all $1 \le i \le k - 1$.
\end{lemma}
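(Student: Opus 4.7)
The plan is to prove both directions by induction on $t$, with the base case $t=0$ being immediate: the empty prefix realizes only paths with $m \le 1$, which have no interior vertices, matching the vacuous condition ``$p_i \le 0$ for $1 \le i \le k-1$'' when $k \le 1$.

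For the forward direction, I would prove a slightly stronger statement by structural induction on \cref{def:realize}: if a path $P$ is realized by the first $L$ tuples of $T_{kij}(n)$, then every interior vertex of $P$ is at most $\lceil L/n^2 \rceil$. If $|P| \le 1$ the claim is vacuous; otherwise \cref{def:realize} gives a witness index $d \le L$ and a split index $x$ with $(i_d, j_d, k_d) = (p_0, p_k, p_x)$, and the two sub-paths $\{p_0,\ldots,p_x\}$ and $\{p_x,\ldots,p_k\}$ are realized by the first $d-1$ tuples. The $d$-th tuple of $T_{kij}(n)$ has third coordinate equal to $\lceil d/n^2 \rceil$ by construction, so the newly exposed interior vertex $p_x = k_d$ satisfies $p_x \le \lceil d/n^2 \rceil \le \lceil L/n^2 \rceil$. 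The remaining interior vertices of $P$ are interior vertices of the sub-paths, which by the structural IH also satisfy the same bound (since $d-1 \le L$). Specializing to $L = tn^2$ yields the desired bound of $t$.

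For the reverse direction, I would use strong induction on $t$. Given $P$ with all interior vertices $\le t$, if $|P| \le 1$ the claim is immediate. Otherwise let $p_x$ be the interior vertex of maximum value (well-defined because $P$ is simple, so interior vertices are distinct). Both sub-paths $\{p_0,\ldots,p_x\}$ and $\{p_x,\ldots,p_k\}$ have interior vertices strictly less than $p_x$, hence $\le p_x - 1 \le t-1$. By the IH applied at level $p_x - 1$, both sub-paths are realized by the length-$(p_x-1)n^2$ prefix. The tuple $(p_0, p_k, p_x)$ occurs in $T_{kij}(n)$ at position $d^\star = (p_x-1)n^2 + (p_0-1)n + p_k$, and $(p_x-1)n^2 < d^\star \le p_x n^2 \le tn^2$. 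Since realization is monotone in the prefix length (any witness for a shorter prefix remains a valid witness for a longer one), both sub-paths are realized by the first $d^\star - 1$ tuples. Invoking \cref{def:realize} with this $d^\star$ and $x$ shows that $P$ is realized by the length-$d^\star$ prefix, hence by the length-$tn^2$ prefix.

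The main obstacle is the forward direction: a witness position $d$ can be as large as $tn^2$, so the sub-paths are only guaranteed to be realized by a prefix of length $d-1$, which need not be bounded by $(t-1)n^2$. A naive induction on $t$ alone therefore stalls; strengthening the statement to track $\lceil L/n^2 \rceil$ and doing structural induction on the realization tree is what makes the recursion close. A minor subtlety in the reverse direction is the edge case $p_x = 1$: then every interior vertex would equal $1$, so by simplicity $P$ has at most one interior vertex, and both sub-paths have length $\le 1$ and fall under the base case.
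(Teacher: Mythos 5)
Your proof is correct; the paper itself never proves \cref{lem:floydwarshall} (it only recalls it as the standard correctness lemma for \cref{alg:kij}, proved by induction on $t$), and your argument---induction on $t$ for the reverse direction, plus the strengthened structural claim bounding interior vertices by $\lceil L/n^2\rceil$ for the forward direction---is exactly the standard argument the paper alludes to. The only caveat is that your reverse direction uses distinctness of interior vertices, i.e.\ implicitly assumes $P$ is simple, which is consistent with how the paper applies the lemma (it only ever invokes it for simple paths, cf.\ \cref{def:obvrel_kij}).
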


This proof strategy surely does not work for \cref{alg:ijk}, but it does work identically for the specific type of path, which we define as a valley path:

\valley*

However, while all valley paths are realized by $T_{ijk}(n)$, this is still not an exhaustive characterization. Consider the path $P = \{3, 101, 1, 102, 2\}$, which is not a valley path but is realized by $T_{ijk}(n)$ as the subsequence $\{(1, 2, 102), (3, 1, 101), (3, 2, 1)\}$ of $T_{ijk}(n)$ can realize $P$. The reason is that the algorithm can glue some large vertices into $P$ before it starts to realize the valley path. For this specific counterexample, one can observe that these large vertices should be adjacent to small vertices to get glued inside a valley path, which motivates the definition of a proper path.

\proper*

With some modification of proofs in \cref{lem:floydwarshall}, we can see that all proper paths are realized by $T_{ijk}(n)$. However, this definition is slightly strict: For example, the path $P = \{1, 2, 3, 4\}$ is realized by $\{(1, 3, 2), (1, 4, 3)\}$ but does not fit in a definition of valley path. More generally, we can append any increasing paths in the back of the proper path and prepend any decreasing paths in the front of the proper path. By addressing these cases, we do reach an appropriate characterization of paths realized by $T_{ijk}(n)$, that fits nicely to the inductive argument given in the proof of \cref{thm:main}.

\subsection{Proof of the Main Theorem}
\main*

\begin{proof}[Proof ($\leftarrow$)]
    Let $n$ be the number of vertices, and let $T_{ijk}(n, i, j)$ be the prefix of $T_{ijk}(n)$ up to and including the element $(i, j, n)$. 
    
    We first show that all proper paths from $i$ to $j$ are realized by $T_{ijk}(n, i, j)$. We perform an induction over $(i, j)$. Consider a proper path $P = \{p_0, p_1, \ldots, p_k\}$, and assume that there is no index $1 \le i \le k - 1$ such that $p_i \le \min(p_0, p_k)$. Then, since there could be no two entries with $p_i > \min(p_0, p_k)$, it holds that $k \le 2$, and we are done. Otherwise, take the maximum $p_x$ such that $p_x < \min(p_0, p_k)$. By maximality, both $\{p_0, p_1, \ldots, p_x\}$ and $\{p_x, p_{x+1}, \ldots, p_k\}$ are proper paths, and by inductive hypothesis, they are realized in $T_{ijk}(p_0, p_x)$ and $T_{ijk}(p_x, p_k)$. Hence, by \cref{def:realize}, $P$ is realized. 

    As $P = \{p_0, p_1, \ldots, p_k\}$ from $p_0$ to $p_k$ are realized by $T_{ijk}(n, p_0, p_k)$, we can see $\{p_0, p_1, \ldots, p_k, x\}$ are realized by $T_{ijk}(n, p_0, x)$ for any $x > p_k$, and similarly, $\{x, p_0, p_1, \ldots, p_k\}$ are realized by $T_{ijk}(n, x, p_k)$ for any $x > p_0$. Using this fact, we can use a similar induction to prove that we can append an increasing path in the back or a decreasing path in the front.
\end{proof}
\begin{proof}[Proof ($\rightarrow$)]
    We use the induction on the length $l$ of the prefix of $T_{ijk}(n)$ to show that the path realized by a length-$l$ prefix of $T(n)$ is always in one of such patterns. This claim is true for $l = 0$. Assume $l \geq 1$ and let $(i, j, k)$ be the last element of $T(n)$. Note that we can assume $i \neq j, j \neq k, k \neq i$, since if one of them holds, we have no new realized paths or a non-simple path. 
    
    We list all possible cases and show we can always find such patterns. Here, we use $*$ to denote that any index from $1$ to $n$ can be there.
    \begin{itemize}
        \item $i < j$, $k < i$: We consider a path in a form of $i \rightarrow \text{(decreasing)} \rightarrow \text{(proper)} \rightarrow k \rightarrow \text{(proper)} \rightarrow \text{(increasing)} \rightarrow j$. Let $z$ be the first vertex in the increasing part of this path, where $i \leq z$ (as $i < j$ such vertex exists). The path between $i$ to $z$ is proper, and the path between $z$ to $j$ is increasing. 
        \item $i < j$, $i < k < j$: All entries of $(k, j, *)$ are not in the current prefix, so the only path from $k$ to $j$ realized now is the trivial path $\{k, j\}$. We can append this in the increasing part of our path. 
        \item $i < j$, $j < k$: All entries of $(i, k, *)$ and $(k, j, *)$ are not in the current prefix, so the path we consider is exactly $\{i, k, j\}$ which is proper.  
        \item $j < i$, $k < j$: We consider a path in a form of $i \rightarrow \text{(decreasing)} \rightarrow \text{(proper)} \rightarrow k \rightarrow \text{(proper)} \rightarrow \text{(increasing)} \rightarrow j$. Let $z$ be the last vertex in the decreasing part of this path, where $j \leq z$ (as $j < i$ such vertex exists). The path between $i$ to $z$ is decreasing, and the path between $z$ to $j$ is proper. 
        \item $j < i$, $j < k < i$: All entries of $(i, k, *)$ are not in the current prefix, so the only path from $i$ to $k$ realized now is the trivial path $\{i, k\}$. We can prepend this in the decreasing part of our path. 
        \item $j < i$, $i < k$: All entries of $(i, k, *)$ and $(k, j, *)$ are not in the current prefix, so the path we consider is exactly $\{i, k, j\}$ which is proper.  
    \end{itemize}
\end{proof}

\section{Algorithm for \iapsp Problem} \label{sec:sparse}
In this section, we prove the following theorem:

\sparse*

For this, we will try to solve the optimization problem described in \cref{cor:main}. Here, it is useful to observe the following:

\begin{observation}\label{obs:reverse}
Let $rev(P)$ be the reverse of the path. A path $P$ is realized by $T_{ijk}(n)$ if and only if $rev(P)$ is realized by $T_{ijk}(n)$.\end{observation}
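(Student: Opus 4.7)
The plan is to reduce the observation directly to the characterization of realized paths supplied by \cref{thm:main}, observing that this characterization is symmetric under path reversal. Concretely, by \cref{thm:main} a nonempty simple path is realized by $T_{ijk}(n)$ if and only if it either (a) has strictly smaller start than end and decomposes as a proper prefix followed by an increasing suffix meeting at a vertex at least as large as the start, or (b) has strictly larger start than end and decomposes as a decreasing prefix followed by a proper suffix meeting at a vertex at least as large as the end. The strategy is to verify that reversal sends (a) to (b) and (b) to (a), which suffices for the $\Leftrightarrow$ since reversal is an involution.

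First I would record three elementary sub-facts, each immediate from the relevant \cref{def:Pinc}, \cref{def:Pdec}, \cref{def:Pproper}: the reverse of an increasing path is decreasing; the reverse of a decreasing path is increasing; and the reverse of a proper path is proper. The proper case is the one worth spelling out: since $\min(p_0, p_k) = \min(p_k, p_0)$ and the forbidden configuration ``two consecutive interior vertices strictly greater than $\min(p_0,p_k)$'' is index-symmetric (if $i, i+1$ witness it in $P$, then the corresponding pair of indices witnesses it in $rev(P)$), properness is preserved by reversal.

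Next I would check how the two clauses of \cref{thm:main} swap. Suppose $P = \{p_0, \ldots, p_k\}$ satisfies clause (a), with witness index $i$. Let $P' = rev(P) = \{q_0, \ldots, q_k\}$ with $q_t = p_{k-t}$, and set $i' = k - i$. Then $q_0 = p_k > p_0 = q_k$; the prefix $\{q_0, \ldots, q_{i'}\}$ is the reverse of the increasing path $\{p_i, \ldots, p_k\}$, hence decreasing; the suffix $\{q_{i'}, \ldots, q_k\}$ is the reverse of the proper path $\{p_0, \ldots, p_i\}$, hence proper; and $q_{i'} = p_i \geq p_0 = q_k$. Thus $P'$ satisfies clause (b). The reverse direction, from clause (b) applied to $P$ to clause (a) applied to $rev(P)$, is entirely symmetric.

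I do not anticipate a genuine obstacle here: the whole content is that the characterization in \cref{thm:main} was packaged in a reversal-symmetric form. The only place one must be mildly careful is in confirming that ``proper'' is truly index-symmetric (as opposed to, say, asymmetric in its treatment of $p_0$ versus $p_k$), but this is clear from the definition. If one instead preferred a proof that avoids \cref{thm:main}, one could argue directly: a realization $T = \{(i_1,j_1,k_1),\ldots,(i_l,j_l,k_l)\}$ of $P$ induces a realization $T^R = \{(j_1,i_1,k_1),\ldots,(j_l,i_l,k_l)\}$ of $rev(P)$ by structural induction on \cref{def:realize}, and $T^R$ is again a subsequence of $T_{ijk}(n)$ because $T_{ijk}(n)$ contains every triple of distinct coordinates; however, routing through \cref{thm:main} is cleaner and leverages work already done.
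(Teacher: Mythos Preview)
Your main argument is correct and is exactly the paper's approach: both simply invoke \cref{thm:main} and observe that its two clauses swap under reversal (with your write-up spelling out the easy checks that the paper leaves implicit). One caution on your throwaway alternative: swapping each $(i,j,k)$ to $(j,i,k)$ does not preserve the \emph{prefix} structure required by \cref{def:realize}, since $(j,i,k)$ sits at a different position in $T_{ijk}(n)$ than $(i,j,k)$, so that direct inductive route would need more care than you suggest.
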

\begin{proof}
    If $P$ satisfies the property from \cref{thm:main}, the reverse also satisfies the property from \cref{thm:main}.
\end{proof}

By \cref{obs:reverse}, it suffices to find $M[i, j]$ for all $i < j$, as the other case can be solved by reversing all edges and repeating the same algorithm. 

We will fix $i$ and devise an algorithm that computes $M[i, j]$ for all $j \geq i$. To start, we find a minimum-length proper path to all $j$. By the description in \cref{cor:main}, we only need to compute the proper path ending in $j \geq i$. By \cref{def:Pproper}, a path is proper if it does not contain two adjacent vertices $p_x, p_{x+1}$ of index greater than $\min(i, j)$ unless one of the vertices is the last vertex of the path. 

Since we have $\min(i, j) = i$, this is equivalent to not using an edge where both endpoints have an index greater than $i$, except where the edge is the last on the path. As we fixed $i$, we can apply SSSP algorithm on $G$ where edges with endpoints greater than $i$ are removed. Let $\textsc{ProperExceptLast}[j]$ be the distance from $i$ to $j$ on such graph. Then, for the last move, we simulate the move for all vertex. Specifically, we initialize $\textsc{Proper}[j] = \textsc{ProperExceptLast}[j]$ for all $1 \le j \le n$, and for all edge $(u, v) \in E$, we set $\textsc{Proper}[v] = \min(\textsc{Proper}[v], \textsc{ProperExceptLast}[u] + w(u \rightarrow v))$. In the end, $\textsc{Proper}[j]$ holds the shortest proper path from $i$ to $j$ for all $j \geq i$.

Finally, we may need to append the increasing path at the back of each proper path. We can use dynamic programming as the increasing path consists of edges headed to the larger indexed vertex. For computing the $M[i, j]$, we have two choices:
\begin{itemize}
    \item Assume $p_x = p_k$ and simply take the shortest proper path from $i$ to $j$.
    \item Otherwise, we pick an edge $(k \rightarrow j)$ and use the edge as a final edge in the optimal path. Here, $k < j$ should hold, and there should be an edge $(k, j) \in E$. The remaining path from $i$ to $k$ is a proper path appended with an increasing path computed by $M[i, k]$.
\end{itemize}

In \cref{alg:sparse}, we present a pseudocode of the above algorithm.

\begin{algorithm}\caption{Computing $M[i, j]$ for fixed $i$, all $j \geq i$}\label{alg:sparse}
\begin{algorithmic}
\State $G^\prime = (G$ without edges where both vertices have index greater than $i)$
\State $\textsc{Proper} = \textsc{SSSP}(G^\prime, i)$
\State $\textsc{ProperExceptLast} = \textsc{Proper}$
\For{$(u, v) \in E$}
\State $\textsc{Proper}[v] = \min(\textsc{Proper}[v], \textsc{ProperExceptLast}[u] + w(u \rightarrow v))$.
\EndFor
\For{$j\gets i, i+1, \ldots, n$}
\State $M[i, j] = \textsc{Proper}[j]$
\For{$k$ incident to $j$}
\If{$i \le k < j$}
\State $M[i, j] = \min(M[i, j], M[i, k] + w(k \rightarrow j))$
\EndIf
\EndFor
\EndFor
\end{algorithmic}
\end{algorithm}

As all procedures except the SSSP use $O(n + m)$ computation, the algorithm's running time is dominated by the SSSP, which leads to the runtime of $O(nT_{SSSP}(n, m))$. 

\section{Subcubic equivalence between APSP and \iapsp} \label{sec:complete}
In this section, we will prove \cref{thm:complete}. For a computational problem $A, B$, we say $A \leq_3 B$ if there is a subcubic reduction from $A$ to $B$ as in \cite{JACM18}. The following result is known:

\begin{restatable}[Theorem 1 of \cite{punkapsp}]{theorem}{hard}
\label{thm:hard}
$\textsc{APSP} \leq_3 \textsc{Incorrect-APSP}$.
\end{restatable}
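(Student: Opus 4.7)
The plan is to follow the construction of \cite{punkapsp}: given an APSP instance $G$ with adjacency matrix $A = A(G)$, define $M^{(0)} = A$ and, for $r = 1, 2, 3$, let $M^{(r)}$ be the matrix returned by the Incorrect-APSP oracle applied to the weighted graph whose adjacency matrix is $M^{(r-1)}$. I would prove that $M^{(3)}$ equals the true APSP distance matrix of $G$. Once this is established the reduction is immediate: three oracle calls plus $O(n^2)$ arithmetic preserve any subcubic running time.

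The inequality $M^{(3)}[u,v] \ge \mathrm{dist}_G(u,v)$ is almost automatic. By induction on $r$, using \cref{def:obvrel_ijk}, every entry of $M^{(r)}$ equals the total weight of some walk in $G$, and so cannot undercut the shortest-path value. All the content lies in the reverse inequality: I need to exhibit, for each pair $u, v$, a three-level nested $T_{ijk}(n)$-realization whose underlying $G$-walk is a shortest $u$-to-$v$ path, and then conclude $M^{(3)}[u,v] \le \mathrm{dist}_G(u,v)$ by a routine induction on \cref{def:realize}.

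To build the three-level realization I would lean on \cref{thm:main}. Assume the chosen shortest path $P = (p_0, \ldots, p_k)$ has $p_0 \le p_k$ (the other case is symmetric). Let $m$ be the minimum vertex label along $P$, attained at some position $s$, and split $P$ into a prefix $P^-$ from $p_0$ to $p_s$ and a suffix $P^+$ from $p_s$ to $p_k$. Because $p_s$ is the global minimum, it plays the role of $\min(p_0,p_k)$ in \cref{def:Pproper} for each half, so the only obstructions to each half being outright $T_{ijk}(n)$-realizable are adjacent pairs of internal vertices that both exceed $p_s$. Each such obstruction is resolved by cutting at the next-smallest vertex on that side; the resulting sub-segments are directly covered by \cref{thm:main}. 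One IAPSP call absorbs these individual segments into virtual edges, a second call threads them along a monotone backbone to reconstruct $P^-$ and $P^+$, and a third call glues $P^-$ to $P^+$ around $p_s$.

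The main obstacle, and where the proof has to be delicate, is to argue that recursion depth three really suffices rather than $\Theta(\log n)$. The saving grace is the flexibility of \cref{def:Pproper}: a single proper path may already contain arbitrarily many ``spikes'' above $\min(p_0,p_k)$ provided no two are adjacent, so each IAPSP round collapses an entire level of such spikes at once; combined with the split at the global minimum, three rounds exhaust every simple path. Verifying that the three iterations actually attain the minimum over all such realizations is then a mechanical induction on \cref{def:realize}, which completes the proof of \cref{thm:hard}.
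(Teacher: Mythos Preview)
Your three-call framework and the inequality $M^{(3)}\ge\mathrm{dist}_G$ are exactly what the paper does, but your decomposition of the shortest path is oriented the wrong way, and this is a genuine gap rather than a stylistic difference.

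You split $P$ at its global \emph{minimum} $p_s$ and then propose to cut each half at ``the next-smallest vertex on that side'', asserting that the resulting sub-segments are covered by \cref{thm:main}. They are not. If the cut points of $P^-$ are the prefix minima $c_0=p_0>c_1>\cdots>c_t=p_s$, then between consecutive cut points every internal vertex lies \emph{above} both endpoints (otherwise it would itself be a prefix minimum). Such a ``mountain'' segment from $a$ to $b$ with $a>b$ is almost never realized by one pass of $T_{ijk}(n)$: in \cref{thm:main} the only admissible decreasing prefix is the single vertex $a$ (the next vertex is already larger), so the whole segment would have to be proper; but all its internal vertices exceed $b=\min(a,b)$, so any two adjacent internal vertices violate \cref{def:Pproper}. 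Concretely, $P^-=(5,3,4,2,6,1)$ has prefix minima $5,3,2,1$, and the segment $(3,4,2)$ is fine, but take instead $P^-=(3,5,4,1)$: the only prefix minima are $3$ and $1$, and the segment $(3,5,4,1)$ is not realized by $T_{ijk}(n)$. Your first call therefore cannot absorb these segments, and the ``monotone backbone'' never gets off the ground.

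The paper avoids this by doing the mirror-image decomposition: it splits $P$ at its global \emph{maximum} $p_m$ and takes as cut points the ``skyscrapers'' (prefix maxima from each endpoint toward $p_m$). Between consecutive skyscrapers every internal vertex lies \emph{below} both endpoints, i.e.\ the segment is a valley and hence trivially proper, so one call of $T_{ijk}(n)$ realizes it. The skyscrapers then form an increasing sequence on one side of $p_m$ and a decreasing sequence on the other, which a second call handles via \cref{thm:main}, and the third call glues the two halves at $p_m$. Swapping ``minimum'' for ``maximum'' throughout your sketch (and ``next-smallest'' for ``prefix maxima'') makes your argument go through and recovers the paper's proof.
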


In the first subsection, we will recite the proof of \cref{thm:hard} from \cite{punkapsp}. We emphasize that this is not our original contribution: Our goal is to make this paper self-contained and to rephrase the statement of \cite{punkapsp} in terms of subcubic reduction.

In the second subsection, we complement \cref{thm:hard} by proving the following theorem, hence closing the gap:

\begin{restatable}{theorem}{easy}
\label{thm:easy}
$\textsc{Incorrect-APSP} \leq_3 \textsc{APSP}$.
\end{restatable}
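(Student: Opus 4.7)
The plan is to apply \cref{cor:main} to express $M$ as a min-plus product of two matrices, each computable from a sub-cubic number of APSP oracle calls. Define $PP[i, v]$ to be the shortest proper $i \to v$ path when $v \geq i$ (else $\infty$) and $I[u, v]$ to be the shortest increasing $u \to v$ path when $u \leq v$, with $I[v, v] = 0$. \cref{cor:main} gives $M[i,j] = \min_v \{PP[i, v] + I[v, j]\} = (PP \star I)[i, j]$ on the strict upper triangle $i < j$; the lower triangle is handled symmetrically by running the whole procedure on the edge-reversed graph via \cref{obs:reverse}, and the diagonal is zero. Since min-plus multiplication is sub-cubic equivalent to APSP, this final product costs one oracle call. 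The matrix $I$ is itself the APSP matrix of the sub-DAG of $G$ consisting of the ``up'' edges $u \to v$ with $u < v$, so one more APSP oracle call computes $I$.

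The technical heart is to compute $PP$. The key structural observation is that in a proper path $i = p_0, p_1, \ldots, p_k = v$ with $v \geq i$, every edge other than the last must have at least one endpoint $\leq i$: the first vertex $p_0 = i$ is trivially $\leq i$, and \cref{def:Pproper} forbids two consecutive interior vertices both $> i$. Writing $G'_i$ for $G$ with every edge whose two endpoints are both $> i$ removed, this gives
\[
PP[i, v] \;=\; \min_{u \in V}\; D'[i, u] \,+\, w^+(u, v),
\]
where $D'[i, u]$ is the shortest $i \to u$ distance in $G'_i$, and $w^+(u, v) = 0$ if $u = v$, $w(u, v)$ if $(u, v) \in E$, else $\infty$. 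Denoting by $W^+$ the matrix of $w^+$, this is $PP = D' \star W^+$: one more min-plus product (hence one more APSP oracle call) once $D'$ is known.

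The crux is therefore computing the matrix $D'$, whose $i$-th row is the SSSP from $i$ in the source-dependent subgraph $G'_i$; a naive $n$ independent SSSPs cost $\Omega(n^3)$ and do not use the APSP oracle. My approach would further decompose any path in $G'_i$ into ``low-to-low hops'' between vertices $\leq i$, where each hop is either a direct low-low edge or a two-edge detour via a single high vertex ($> i$), casting the $i$-th row of $D'$ as APSP in an augmented low subgraph on at most $i$ vertices. To batch across thresholds, I would attempt a divide-and-conquer over the vertex index: at each level split $[1, n]$ in half, make one APSP oracle call on an $O(n)$-vertex graph that simultaneously captures the information ``restricted to using only the upper half as high detours'' for all thresholds in the lower half, and recurse on each half. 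The hardest step will be engineering this per-level APSP gadget so that the per-threshold rows of $D'$ can be extracted from a single oracle output by a constant number of min-plus post-processing steps; if this succeeds, the recursion makes $O(\log n)$ oracle calls on $O(n)$-vertex graphs which, combined with the two closing min-plus products ($PP = D' \star W^+$ and $M = PP \star I$) and the symmetric run on the reversed graph for the lower triangle, establishes the sub-cubic reduction \iapsp~$\leq_3$~\apsp.
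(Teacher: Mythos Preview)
Your outer scaffolding is correct and matches the paper: by \cref{cor:main} and \cref{obs:reverse}, $M$ on the upper triangle is $(PP\star I)$ where $I$ is APSP on the ``up''-edge DAG and $PP[i,v]$ is the shortest proper $i\to v$ path for $v\ge i$; one min-plus product and one APSP call handle the closing steps. Your reduction $PP = D'\star W^{+}$, with $D'[i,u]$ the $i\to u$ distance in $G'_i$, is also correct and is precisely what \cref{alg:sparse} exploits row-by-row.

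The gap is the computation of $D'$. You acknowledge this is the crux and offer only a schematic divide-and-conquer, explicitly conditioning the argument on ``if this succeeds''. The difficulty is real: the graph $G'_i$ depends on the row index~$i$ through the threshold (an edge $a\to b$ survives iff $\min(a,b)\le i$), so the augmented low graph $H_i$ on $[1,i]$ has edge weights $h_i(a,b)=\min\bigl(w(a,b),\,\min_{c>i} w(a,c)+w(c,b)\bigr)$ that change with~$i$ even for fixed $a,b$. Your sketch does not specify the per-level APSP gadget that would let a single oracle output encode all these thresholds, and it is not clear one exists in the form you describe. As written, this is a proposal, not a proof.

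The paper sidesteps $D'$ entirely. Via \cref{lem:propvalley}, proper paths in $G$ correspond to \emph{valley} paths in $G^{2}$ (all intermediates $\le\min(i,j)$), so it suffices to compute all-pairs minimum valley paths in a single fixed graph, with no per-row threshold. These are computed by doubling: a valley path decomposes at its minimum into a chain of \emph{descending} valley subpaths followed by a chain of \emph{ascending} valley subpaths (the ``skyscraper'' decomposition), so
\[
V_{l} \;=\; \mathrm{APSP}\bigl(V_{l-1}\big|_{i\ge j}\bigr)\;\odot\;\mathrm{APSP}\bigl(V_{l-1}\big|_{i\le j}\bigr),
\qquad V_{0}=A(G)\odot A(G),
\]
and $V_{\lceil\log_2 n\rceil}$ is the full valley matrix. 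This uses $O(\log n)$ APSP calls on fixed $n\times n$ matrices and, combined with the final $\odot\,\mathrm{APSP}(A(G'))$ for the increasing suffix, gives the subcubic reduction. If you want to salvage your route, the cleanest fix is to observe that your $D'[i,u]$ (followed by one free edge) is exactly the valley-in-$G^{2}$ distance, and then adopt this doubling; the divide-and-conquer on thresholds is unnecessary.
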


Given that \cref{thm:easy} and \cref{thm:hard} are true, the proof of \cref{thm:complete} follows by definition of subcubic equivalence in \cite{JACM18}.
\complete*

\subsection{Proof of \cref{thm:hard}}
\begin{lemma}\label{lem:hard}
    Let $T^i$ be a sequence $T$ repeated for $i$ times and $P$ be any simple path in a graph with $n$ vertices. $(T_{ijk}(n))^3$ realizes $P$.
\end{lemma}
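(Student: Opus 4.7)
My plan is to show that any simple path $P = p_0 p_1 \ldots p_k$ is realized by $(T_{ijk}(n))^3$. The backbone is a closure property derived from \cref{def:realize}: if $P = P_1 \cdot P_2$ splits at an internal vertex $p_x$ and both halves $P_1, P_2$ are realized by a prefix $T'$ of the sequence, then $P$ is realized by $T' \cdot T_{ijk}(n)$ via the tuple $(p_0, p_k, p_x)$ in the appended copy of $T_{ijk}(n)$---the preceding prefix contains $T'$ and hence realizes both halves. Applying this closure once gives that any path decomposing as two pieces each realized by $T_{ijk}(n)$ is itself realized by $(T_{ijk}(n))^2$; applying it again, any path admitting a hierarchical two-level split whose four leaf-pieces are each realized by one pass of $T_{ijk}(n)$ is realized by $(T_{ijk}(n))^3$.

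Thus the proof reduces to exhibiting a hierarchical two-level decomposition of $P$ in which each of the four leaf-pieces fits the structural form of \cref{thm:main}. For the outer split, I would choose $p_x = p_t$ where $p_t$ is the global minimum of $P$ (handling the endpoint case via \cref{obs:reverse}), giving halves $A = p_0 \ldots p_t$ and $B = p_t \ldots p_k$, each with $p_t$ as an endpoint minimum. By \cref{obs:reverse}, I may focus on a single type of half, say a path $Q = q_0 q_1 \ldots q_\ell$ ending at its minimum $q_\ell$.

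For the inner split of $Q$, I would choose $y$ as the smallest index such that $q_y, q_{y+1}, \ldots, q_\ell$ is monotonically decreasing; the tail $Q[y..\ell]$ is then decreasing and hence realized by a single pass via \cref{thm:main} (Case 2 with $i = k$, with trivial proper suffix). For the prefix $Q[0..y]$, I would argue it fits Case 1 of \cref{thm:main}: maximality of the decreasing tail forces $q_{y-1} < q_y$, so $q_y$ is a local maximum, and in generic cases $q_y \ge q_0$, letting the entire prefix $Q[0..y]$ serve as the "proper" part with a trivial increasing suffix.

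The main obstacle will be verifying properness of $Q[0..y]$ in the sense of \cref{def:Pproper}: in zigzag configurations, two consecutive internal vertices may both exceed $\min(q_0, q_y)$, invalidating this choice. In those degenerate cases I would adjust the decomposition---either by peeling off a single edge at one endpoint of $P$ (which is trivially realized by \cref{def:realize}) and applying the argument recursively to a shorter residual path, or by shifting the outer split off the global minimum to an adjacent vertex whose residual structure is cleaner. The resulting case analysis follows the dichotomy of \cref{thm:main} and is the delicate technical step of the proof.
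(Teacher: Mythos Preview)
Your closure principle is correct, and the idea of a two-level binary split is a reasonable way to organize the argument. The gap is in the specific decomposition you propose: splitting at the global \emph{minimum} is the wrong pivot for this problem, and the ``degenerate cases'' you defer are in fact the generic case.

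The definition of \emph{proper} (\cref{def:Pproper}) forbids two adjacent internal vertices that both \emph{exceed} $\min(p_0,p_k)$. This meshes with \emph{valley} paths (all internal vertices below both endpoints), not with paths whose internal vertices sit above the endpoints. If you split at the minimum, then in each half the natural sub-segments (between consecutive prefix/suffix \emph{minima}) are ``mountains'': all interior vertices lie above both segment endpoints, so they are typically \emph{not} proper and \emph{not} one-pass realizable. Concretely, take $Q = 1,5,2,4,3,6,0$, a half ending at its minimum. Your recipe gives decreasing tail $6,0$ and prefix $1,5,2,4,3,6$; but this prefix is not realized by one pass of $T_{ijk}(n)$: the only increasing suffixes are $\{6\}$ and $\{3,6\}$, and in both cases the remaining prefix has the adjacent internal pair $5,2$, both exceeding $\min=1$, so it is not proper. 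Your fallback of peeling a single edge or shifting the split is not a local patch---for longer zigzags the obstruction persists, and it is unclear how any such adjustment stays within the three-pass budget.

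The paper's proof makes the dual choice: split at the global \emph{maximum} $p_m$. Then in each half one takes the \emph{skyscrapers} (running prefix/suffix maxima toward $p_m$); between consecutive skyscrapers every vertex is strictly below both endpoints, so each segment is valley, hence proper, hence one-pass by \cref{thm:main}. The skyscraper labels themselves are monotone (increasing toward $p_m$ on the left, decreasing on the right), and the second pass glues these one-pass segments along that monotone sequence using the ``append increasing / prepend decreasing'' part of \cref{thm:main}. The third pass then joins the two halves at $p_m$. In short: pivoting at the maximum is what makes the sub-segments valleys and hence proper; pivoting at the minimum produces mountains, which need not be proper, and your outline does not supply an argument that survives that obstruction.
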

\begin{proof}
    Given a path $P = \{p_0, p_1,\ldots, p_k\}$, let $m$ be an integer in $[0, k]$ where $p_m = \max_{i = 0}^{k} p_i$. We call $p_i$ a \textit{skyscraper}, if it satisfies the following:
    \begin{itemize}
        \item $i < m$, and there exists no $j < i$ such that $p_j > p_i$.
        \item $i = m$.
        \item $i > m$, and there exists no $j > i$ such that $p_j > p_i$.
    \end{itemize}
    It is helpful to observe that $p_0$ and $p_k$ are always a skyscraper.
    
    Let $0 = i_0 < i_1 < \ldots < i_j = m < i_{j+1} < \ldots < i_l = k$ be a list of indices where $p_{i_s}$ is a skyscraper for all $0 \le s \le l$. For all $1 \le s \le l$, it holds that the subpath $p_{i_{s-1}}, p_{i_{s-1} + 1}, \ldots, p_{i_s}$ is proper - in fact, there can't be even a single vertex with value larger than $\min(p_{i_{s-1}}, p_{i_s})$ since that will add such vertex into a list of skyscraper, contradicting our assumption that $i_{s-1}$ and $i_s$ are adjacent. By \cref{thm:main}, we can see that the subpath $p_{i_{s-1}}, p_{i_{s-1} + 1}, \ldots, p_{i_s}$ is realized by $(T_{ijk}(n))^1$.

    Next, we will prove that both subpath $p_0, p_1, \ldots, p_m$ and $p_m, p_{m+1}, \ldots, p_k$ are realized in $(T_{ijk}(n))^2$. Given that all subpaths between $p_{i_{s-1}}$ and $p_{i_s}$ are realized in $(T_{ijk}(n))^1$, the remaining skyscrapers are either increasing or decreasing according to its relative location per $p_m$. The statement holds as we can realize the increasing and decreasing path by \cref{thm:main}.

    Finally, to realize $P$ in $(T_{ijk}(n))^3$, we need $(p_0, p_k, p_m)$ in $T_{ijk}(n)$ which we definitely have.
\end{proof}
\hard*
\begin{proof}
    By Definition 3.1 in \cite{JACM18}, it suffices to design a subcubic algorithm for $\textsc{APSP}$ which calls the oracle to compute $\textsc{Incorrect-APSP}$ for an $n \times n$ matrix, at most a polylogarithmic times. 
    
    We take an input graph, construct an adjacency matrix, call $\textsc{Incorrect-APSP}$, call $\textsc{Incorrect-APSP}$ again in the returned matrix, call $\textsc{Incorrect-APSP}$ again in the returned matrix, and return the output of $\textsc{Incorrect-APSP}$. This algorithm calls an oracle for $3$ times and runs in quadratic time, which satisfies all requirements to be a subcubic reduction. The correctness follows by \cref{lem:hard}.
\end{proof}

\subsection{Proof of \cref{thm:easy}}
Let $A \odot B$ be a min-plus matrix product of two $n \times n$ matrix $A$ and $B$. The following lemma shows that we can convert a proper path minimization into a valley path minimization problem. Note that the definition of $G^2$ is equivalent to the usual definition of graph powers, defined as a power of adjacency matrix (here, the multiplication operator is $\odot$).

\begin{lemma}  \label{lem:propvalley}
Given a graph $G$, A proper path of $G$ from $i$ to $j$ with weight $w$ exists if and only if a valley path of $G^2$ from $i$ to $j$ with weight $w$ exists. Here, $G^2$ is a complete directed graph on the same set of vertices with $G$, where the weight of the edge from $i$ to $j$ is the minimum weight path from $i$ to $j$ using at most $2$ edges.
\end{lemma}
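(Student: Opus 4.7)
The plan is to establish each implication by a direct, weight-controlled construction between proper paths in $G$ and valley paths in $G^2$.

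For the forward direction, I would take a proper path $P = p_0 p_1 \cdots p_k$ in $G$ with $p_0 = i$, $p_k = j$, set $m = \min(i,j)$, and mark the ``small'' positions
\[
  S \;=\; \{0, k\} \;\cup\; \{\ell \in [1, k-1] : p_\ell \le m\}.
\]
The proper property says no two consecutive interior vertices both exceed $m$, which is exactly the statement that consecutive elements of $S$ differ by at most $2$. Writing $S = \{s_0 < s_1 < \cdots < s_r\}$, every block $p_{s_{t-1}} \to \cdots \to p_{s_t}$ consists of one or two edges of $G$, so the $G^2$-edge $(p_{s_{t-1}}, p_{s_t})$ exists with weight at most that of the block. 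Concatenating these $G^2$-edges yields a walk from $i$ to $j$ in $G^2$ of total weight at most $w$; all its interior vertices lie in $S$ and are therefore $\le m$, so it is a valley walk.

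For the reverse direction, I would take a valley path $Q = q_0 q_1 \cdots q_r$ in $G^2$ from $i$ to $j$ of weight $w$ and expand each $G^2$-edge $(q_{t-1}, q_t)$ back into a realizing one- or two-edge path in $G$ of the same weight. Concatenating gives a walk $\tilde P$ in $G$ from $i$ to $j$ of weight exactly $w$, whose interior vertices are of two types: ``corner'' vertices $q_t$ with $0 < t < r$, which are $\le m$ by the valley hypothesis on $Q$, and at most one ``middle'' vertex per two-edge block. Any two consecutive interior vertices therefore include at least one corner vertex --- middles coming from distinct blocks are always separated by the corner between those blocks, and a single block contributes only one middle --- so they cannot both exceed $m$, giving that $\tilde P$ is proper.

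The main obstacle will be nailing down the two small combinatorial claims that drive the constructions: the ``gap-at-most-$2$'' statement in the forward direction and the ``every consecutive interior pair of $\tilde P$ contains a corner vertex'' statement in the reverse direction, together with the edge cases in which a middle vertex of the first or last block is adjacent to $i$ or $j$ itself. A secondary point is the path-versus-walk distinction: both constructions can produce non-simple walks, but because we are ultimately interested in minimum-weight versions and the standing hypothesis forbids negative cycles, these minima agree with the minima over simple paths, so the two sides of the ``iff'' may be identified cleanly at the level of minimum weights.
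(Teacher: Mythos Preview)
Your proposal is correct and follows the same construction as the paper: in both directions you contract or expand two-edge blocks between the ``small'' positions, exactly as in the paper's proof. Your write-up is in fact more careful than the paper's three-sentence argument, in particular in flagging the weight inequality (a $G^2$-edge can be strictly cheaper than the block it replaces) and the path-versus-walk issue, and in resolving both cleanly at the level of minima via the no-negative-cycles hypothesis.
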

\begin{proof}
    Every proper path of $G$ translates to a valley path of $G^2$, as all vertices that are not $i, j$ and have an index at most $\min(i, j)$ are either adjacent in the path or have at most one intermediate vertex in the path. Conversely, given a valley path of $G^2$, we can turn it into a path in $G$ by replacing an edge with two edges and a vertex. Those new are the only vertex that can violate the $p_i \le \min(p_0, p_k)$ condition, but they are not adjacent by construction.
\end{proof}

We will devise an algorithm that uses subcubic oracle for APSP-complete problems to compute the resulting matrix $M[i, j]$. Note that we will only demonstrate how to compute $M[i, j]$ for $i \leq j$ - as in \cref{obs:reverse}, we can compute the opposite part by computing the transpose of $A(G)$ in quadratic time and running the same algorithm. We use the result from \cite{JACM18} that APSP and min-plus matrix multiplication are subcubic equivalent. 

Let's first find a matrix for minimum proper paths in $G$, which by \cref{lem:propvalley} is equivalent to minimum valley paths in $G^2$. Let $V_l[i, j]$ be a minimum weight of the valley path from $i$ to $j$, which includes all valley paths with at most $2^l$ edges. Note that this definition includes all valley paths with $\leq 2^l$ edges, but it is not limited to such valley paths - it is, however, limited to valley paths. Here, $V_0 = A(G^2) = A(G)^2$, and we can compute this using min-plus matrix multiplication.

To compute the entry $V_l[i, j]$, we fix the vertex $k$ in the middle of the (imaginary) path $P$. As $k$ is in the middle of $P$, and since $P$ has at most $2^l$ edges, the path to the left and the right of $k$ has at most $2^{l-1}$ edges. 

Let's say a valley path from $i$ to $j$ is \textit{ascending} if $i < j$ and \textit{descending} otherwise. The path to the left of $k$ can be divided into a block of descending paths, and the path to the right of $k$ can be divided into a block of ascending paths. To see this, consider all vertex $v$ such that the subpath from $v$ to $k$ does not have any vertex greater than $v$, which we refer to as a \textit{skyscraper}. Then, for each adjacent skyscraper, the subpath they form is a valley, as otherwise, we find a skyscraper in between them.

Conversely, it's easy to see that a concatenation of ascending valley paths, followed by descending valley paths, forms a valley path. Hence, a valley path of at most $2^l$ edges is equivalent to a sequence of descending and ascending valley paths, each with at most $2^{l-1}$ edges.

The minimum path from $i$ to $j$ that is a sequence of descending valley paths of at most $2^{l-1}$ edges, can be computed with the APSP oracle: We can provide an adjacency matrix of $V_{l-1}[i, j]$, where all entries with $i < j$ are overwritten to $\infty$. Conversely, we can compute the sequence of ascending paths with the APSP oracle by overwriting $V_{l-1}[i, j]$ with $ I> j$ to $\infty$. Then, we can combine those two patterns by a single min-plus matrix multiplication.

Finally, we need to append an increasing path in the back of the path. By supplementing the adjacency matrix of $G$ where we overwrite all entries with $i > j$ to $\infty$, we can compute the increasing path of minimum cost for all pairs using an APSP oracle. We can obtain the desired answer by multiplying this with the valley path matrix.

In \cref{alg:complete}, we present a pseudocode of the above algorithm.

\begin{algorithm}\caption{Computing $M[i, j]$ for all $1 \le i \le j \le n$}\label{alg:complete}
\begin{algorithmic}
\Ensure $A \odot B$ is subcubic
\Ensure $APSP(G)$ is subcubic
\State $V_0 = A(G) \odot A(G)$
\For{$l\gets1, 2, \ldots, \lceil \log_2 n \rceil$}
\State $\textsc{DescendingValley} = V_{l-1}$
\State $\textsc{AscendingValley} = V_{l-1}$

\For{$i\gets1, 2, \ldots, n$}
\For{$j\gets1, 2, \ldots, n$}
\If{$i < j$}
\State $\textsc{DescendingValley}[i, j] = \infty$
\ElsIf{$i > j$}
\State $\textsc{AscendingValley}[i, j] = \infty$
\EndIf
\EndFor
\EndFor
\State $V_l = APSP(\textsc{DescendingValley}) \odot APSP(\textsc{AscendingValley})$
\EndFor
\State $\textsc{Valley} = V_{\lceil \log_2 n \rceil}$
\State $G^\prime = (G \text{ with edges $u \rightarrow v$ such that }u < v)$
\State $\textsc{Answer} = \textsc{Valley} \odot APSP(A(G^\prime))$
\For{$i\gets1, 2, \ldots, n$}
\For{$j\gets1, 2, \ldots, n$}
\If{$i \leq j$}
\State $M[i, j] = \textsc{Answer}[i, j]$
\EndIf
\EndFor
\EndFor
\end{algorithmic}
\end{algorithm}

As we make at most $O(\log n)$ calls to the subcubic oracles, \cref{alg:complete} is subcubic, which proves \cref{thm:easy}.

\bibliographystyle{alpha}
\bibliography{ref}

\end{document}